\newtheorem{theorem}{Theorem}
\newtheorem{lemma}{Lemma}
\newtheorem{definition}{Definition}
\newtheorem{claim}{Claim}
\newcommand{\OPT}{\textit{OPT}}
\title{Maximizing Non-monotone Submodular Set Functions Subject to Different Constraints: Combined Algorithms}
\author{Salman Fadaei\footnote{Faculty of Engineering, University of Tehran, Tehran, Iran. Email: salman.fadaei@gmail.com}
\and MohammadAmin Fazli\footnote{Department of Computer Engineering, Sharif University of Technology, Tehran, Iran. Email: fazli@ce.sharif.edu}
\and MohammadAli Safari\footnote{Department of Computer Engineering, Sharif University of Technology, Tehran, Iran. Email: msafari@sharif.edu} 
 }
\begin{document}
\maketitle

We study the problem of maximizing constrained non-monotone submodular functions and provide approximation algorithms that  improve  existing algorithms in terms of either the approximation factor or simplicity. Our algorithms combine existing local search and greedy based algorithms.
Different constraints that we study are exact cardinality and multiple knapsack constraints. For the multiple-knapsack constraints we achieve a $(0.25-2\epsilon)$-factor algorithm.

We also show, as our main contribution, how to use the continuous greedy process for non-monotone functions and, as a result, obtain a $0.13$-factor approximation algorithm for maximization over any solvable down-monotone polytope.
The continuous greedy process  has  been previously used for maximizing smooth monotone submodular function over a down-monotone polytope \cite{CCPV08}. 
%It appears that  our approach could be combined with pipage rounding and provide approximation algorithms for various constrained maximization problems over  non-monotone submodular functions.  As an application of this technique we get a $0.13$-approximation for maximizing non-monotone submodular functions subject to both matroid and multiple knapsacks.
This implies a $0.13$-approximation for several discrete problems, such as maximizing a non-negative submodular function subject to a matroid constraint and/or multiple knapsack constraints.

%\textbf{Keywords:} 
%\textit{non-monotone submodular set functions, approximation, algorithms, continuous greedy process, matroid, knapsack, cardinality}

\section{Introduction}
Submodularity is the discrete analogous of convexity. Submodular set functions naturally arise in several different important problems including cuts in graphs \cite{IFF01, GW94}, rank functions of matroids \cite{E70}, and set covering problems \cite{F98}. The problem of maximizing a submodular function is NP-hard as it generalizes  many important problems such as Maximum Cut \cite{FG95}, Maximum Facility Location \cite{B03, AS99}, and the Quadratic Cost Partition Problem with non-negative edge weights \cite{GG05}.

\begin{definition}
A function $f \colon 2^X \to \mathbb{R_+}$ is called submodular if and only if $ \forall A,B\subseteq X$, $f\left(A \right) + f \left(B \right) \geq f \left(A \cap B \right) + f \left(A \cup B \right)$. An alternative definition is that the marginal values of items should be non-increasing, i.e., $ \forall A,B\subseteq X $, $A \subseteq B \subseteq X$ and $x \in X\setminus B$, $f_A(x) \geq f_B(x)$, where $f_A(x)=f(A\cup \{x\})-f(A)$; $f_A(x)$ is called  the marginal value of $x$ with respect to $A$.  
\end{definition}
The Submodular Maximization Problem is a pair $(f, \Delta)$ where $f$ is a submodular function and $\Delta$ is the search domain. Our aim is to find a set $A^{*} \in \Delta$ whose value,  $f(A^{*})$, is maximum. Our focus is on non-monotone submodular functions, i.e., we do not require that  $f(A) \leq f(B)$ for $A\subseteq B \subseteq X$.
\begin{definition}
A packing polytope is a polytope $P \subseteq [0, 1]^X$ that is down-monotone: If $x, y \in [0, 1]^X$ with $x \preceq y$ and $y \in P$, then $x \in P$. A polytope $P$ is solvable if we can maximize linear functions over $P$ in polynomial time~\cite{Sch03}. 
\end{definition}
A packing polytope constraint binds the search domain ($\Delta)$ to a packing polytope. 
\begin{definition}

For a ground set $X$, $k$ weight vectors $\{w^i\}_{i=1}^{k}$, and k knapsack capacities $\{C_i\}_{i=1}^{k}$ are given. A set $V \subseteq X$ is called packable if $\sum_{j \in V} w^i_j \leq C_i$, for  $i=1,\ldots,k$.
\end{definition}
The multiple knapsacks constraint forces us to bind search domain to packable subsets of $X$. In the exact cardinality constraint, we have $\Delta = \{ S\subseteq X\colon \hspace{0.1cm}|S| = k\}$.

\paragraph{\bf Background}
The problem of  maximizing  non-monotone submodular functions, with or without some constraints,  has been extensively studied in the literature. In \cite{FMV07} a $0.4$-factor approximation algorithm was developed for maximizing unconstrained (non-negative, non-monotone) submodular functions. The approximation factor was very recently improved to $0.41$ by Oveis Gharan et al.\cite{OV11}.

For the constrained variants, Lee et al.~\cite{LMNS09}, Vondr{\'a}k \cite{V09}, and Gupta et al.~\cite{GRST10} provide the best approximation algorithms. Lee et al.~\cite{LMNS09} developed a $0.2$-approximation  for the problem subject to a constant number of knapsack constraints, followed by  a $0.25$-approximation for cardinality constraint and a $0.15$-approximation for the exact cardinality constraint. 
The latter two approximation factors  were later improved by Vondr{\'a}k~\cite{V09} to $0.309$ and $0.25$, respectively. As a new way of tackling these problems, Gupta et al.~\cite{GRST10} provide greedy algorithms that achieve the approximation factor of $0.17$ for a knapsack constraint. Greedy algorithms are more common for maximizing monotone submodular functions.

In a recent work, Vondr{\'a}k~\cite{V08} and Calinescu et al.~\cite{CCPV08} used the idea of multilinear extension of submodular functions and achieved optimal approximation algorithms for the problem of maximizing a monotone submodular function subject to a matroid.

\subsection{Our Results}
We consider the problem subject to different constraints. Our results are summarized in Table \ref{tab:our-results-comparison} and are  compared with existing results.
We obtain simple  algorithms for the exact cardinality constraint, multiple knapsack constraints, and a new approximation algorithm for the solvable packing polytope constraint.
\begin{table}
\centering
\begin{tabular}{|l|c|c|c|c|l|}
\hline
\textbf{Constraint} & \textbf{\cite{LMNS09}} & \textbf{\cite{V09}} & \textbf{\cite{GRST10}} & \textbf{Our Result} & \textbf{Claim}\\
\hline
%\hline
%Cardinality & $0.25$ & $0.309$+ & $0.17$ & $0.28$ & Simpler\\
\hline
Exact Cardinality & $0.15$ & $0.25$ & $0.17$ & $0.25$ & Simpler\\
\hline 
k-Knapsacks & $0.2$ & - & - & $0.25$ & Better ratio\\%*
%\hline 
%1-Knapsack & $0.2$ & - & $0.17$ & $0.28$ & Better ratio\\
\hline 
Packing Polytope & - & - & - & $0.13$ & New ratio\\%*
\hline 
\end{tabular}
\caption{Comparison of our results with the existing ones. 
%^+ - Very recently a 0.325-approximation algorithm has been achieved in \cite{OV10}. 
%$*$ - A 0.325-approximation algorithm has been achieved in \cite{CVZ10} very recently.
}
\label{tab:our-results-comparison}
\end{table}

\subsection{Preliminaries}
In this section, we introduce the concepts and terms that we often use throughout this paper.
\paragraph{Multilinear Extension}
For a submodular function $f\colon 2^X \rightarrow \mathbb{R}_+$, the multilinear extension of $f$ is defined as follows \cite{CCPV07}: $F\colon [0,1]^X \rightarrow \mathbb{R}_+$ and
\begin{displaymath}
F(x)=\textbf{E}[f(x)]= \sum_{S \subseteq X}f(S) \prod_{i \in S}x_i \prod_{i \in X\setminus S}(1-x_i).
\end{displaymath}
This concept is frequently used in recent works \cite{CCPV07, CCPV08, KST09, LMNS09, V09}. The multilinear extension of every submodular function is a smooth submodular function \cite{CCPV08}.
The gradient of $F$ is defined as  $\nabla F=(\frac {\partial F}{\partial x_1},\ldots, \frac{\partial F}{\partial x_n})$.
\paragraph{Matroid} A matroid is a pair $\mathcal{M} = (X, \mathcal{I})$ where $\mathcal{I} \subseteq 2^{X}$ and 
\begin{itemize}
\item $\forall B\in \mathcal{I}, A \subset B \Rightarrow A \in \mathcal{I} $.
\item $\forall A,B \in \mathcal{I}, |A| < |B| \Rightarrow \exists x \in B\backslash A; A \cup \{x\} \in \mathcal{I}$.
\end{itemize}
%If $\mathcal{I} = 2^{X}$, we call $\mathcal{M}$ a free matroid. 
\paragraph{Matroid Polytopes}
A matroid polytope is a solvable packing polytope with special properties. Given a matroid $\mathcal{M}=(X,\mathcal{I})$, we define the matroid polytope as 
\begin{displaymath}
P(\mathcal{M})=\{x \geq 0 \colon \forall S \subseteq X;\sum_{j \in S} x_j \leq r_{\mathcal{M}}(S)\}
\end{displaymath}
where $r_{\mathcal{M}}(S)=max \{|I| \colon I \subseteq S; I \in \mathcal{I}\}$ is the rank function of matroid $\mathcal{M}$. This definition shows that the matroid polytope is a packing polytope.

\paragraph{Randomized Pipage Rounding}
%Pipage rounding was first introduced by Ageev and Sviridenko~\cite{AS04} for bipartite matching polytopes. Later, Calinescu et al.~\cite{CCPV07} adapted the technique to matroid polytopes.  The adapted pipage rounding technique in \cite{CCPV07} is a determinstic procedure. The randomized variant of the technique was later introduced and used in \cite{CCPV08}. 
Given a matroid $\mathcal{M}=(X,\mathcal{I})$, the randomized pipage rounding converts a fractional point in the matroid  polytope, $y \in P(\mathcal{M})$ into a random set $B\in \mathcal{I}$ such that $\textbf{E}[f(B)] \geq F(y)$, where $F$ is the multilinear extension of the submodular function $f$~\cite{CCPV07, CCPV08, V09}.
%Later, Vondr{\'a}k~\cite{V09} extended the technique and showed that  the starting point does not require to be inside the matroid base polytope and is enough to be  inside a matroid polytope, $y \in P(\mathcal{M})$.

\subsection{Recent Developments}
There has been some very recent relevant works independent and concurrent to our work.
Kulik et al. give an $(0.25-\epsilon)$-approximation algorithm for maximizing non-monotone submodular functions subject to multiple knapsacks~\cite{KST11}. Chekuri et al.~\cite{CVZ11} show that, by using a fractional local search, a $0.325$-approximation could be achieved for maximizing non-monotone submodular functions subject to any solvable packing polytope. However, our  $0.13$-factor approximation algorithm is still of independent interest in that it uses the continuous greedy approach rather than local search and, thus, it would be more efficient in practice.

%\input{cardinality}
%\paragraph{}
\section{Exact Cardinality Constraint}
In this section, we propose very simple algorithm for the exact cardinality constraint problem whose approximation factor matches the best existing one, yet it is much simpler and easy to implement. 
Our algorithm is a simple combination of existing local search or greedy based algorithms.
Our main tool is the following useful lemma from Gupta et al.~\cite{GRST10}.
\begin{lemma} \label{lem-three-subsets}
(\cite{GRST10}) Given sets $C, S_1 \subseteq X$, let $C' = C\setminus S_1$ and $S_2 \subseteq  X\setminus S_1$.
Then, $f(S_1 \cup C) + f(S_1 \cap C) + f(S_2 \cup C') \geq f(C)$.
\end{lemma}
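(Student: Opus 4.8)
The plan is to derive the inequality from two carefully chosen applications of submodularity together with the non-negativity of $f$, and then to chain them. The two ``outer'' sets $S_1 \cup C$ and $S_2 \cup C'$ will be handled by one submodularity step, while the ``inner'' set $S_1 \cap C$ will be paired with the leftover piece $C'$ in a second step; non-negativity is used to discard the surplus terms that appear.

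First I would record the relevant set identities, which rely entirely on the hypotheses $C' = C \setminus S_1$ (so that $S_1 \cap C' = \emptyset$) and $S_2 \subseteq X \setminus S_1$ (so that $S_1 \cap S_2 = \emptyset$). Observe that $C = (S_1 \cap C) \cup C'$ is a disjoint union. For the outer sets, $(S_1 \cup C) \cup (S_2 \cup C') = S_1 \cup S_2 \cup C$ since $C' \subseteq C$, while the intersection collapses: because $S_1$ meets neither $S_2$ nor $C'$, one gets $(S_1 \cup C) \cap (S_2 \cup C') = C'$ (any element of $S_2 \cap C$ already lies in $C'$). Verifying this intersection identity is the step that requires the most care, and it is really the crux of the whole argument.

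Next I would apply submodularity to $S_1 \cup C$ and $S_2 \cup C'$, obtaining $f(S_1 \cup C) + f(S_2 \cup C') \geq f(S_1 \cup S_2 \cup C) + f(C') \geq f(C')$, where the final step discards the union term using $f \geq 0$. Then I would apply submodularity to the disjoint pair $S_1 \cap C$ and $C'$, whose union is $C$ and whose intersection is empty, giving $f(S_1 \cap C) + f(C') \geq f(C) + f(\emptyset) \geq f(C)$.

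Finally I would combine the two estimates: adding $f(S_1 \cap C)$ to the first chain yields $f(S_1 \cup C) + f(S_1 \cap C) + f(S_2 \cup C') \geq f(S_1 \cap C) + f(C') \geq f(C)$, the last inequality being exactly the second submodularity step. The only real subtlety lies in the bookkeeping of the set operations under the disjointness hypotheses; once the identity $(S_1 \cup C) \cap (S_2 \cup C') = C'$ is established, the remaining manipulations are routine.
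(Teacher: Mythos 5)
Your proof is correct: both submodularity applications are valid (the key identity $(S_1 \cup C) \cap (S_2 \cup C') = C'$ holds because $S_2 \cap C \subseteq C \setminus S_1 = C'$, and the pair $S_1 \cap C$, $C'$ is indeed a disjoint decomposition of $C$), and non-negativity lets you drop the surplus terms $f(S_1 \cup S_2 \cup C)$ and $f(\emptyset)$. The paper itself states this lemma without proof, citing \cite{GRST10}, and your argument is essentially the standard one given there, so nothing further is needed.
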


Let $k$  be the right-hand side of the cardinality constraint.
\begin{theorem} \label{thm-exact-card-non-mono}
There is a $0.25$-factor approximation algorithm for maximizing a non-monotone submodular function subject to an exact cardinality constraint.
\end{theorem}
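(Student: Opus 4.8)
The plan is to combine two rounds of cardinality-constrained local search with Lemma~\ref{lem-three-subsets}. Fix an optimal solution $C$ with $|C|=k$ and value $f(C)=OPT$. First I would compute an (approximate) local optimum $S_1$ among all size-$k$ subsets of $X$, where the only allowed moves are single-element swaps that keep the cardinality at exactly $k$. Then, restricting attention to the ground set $X\setminus S_1$, I would compute a second (approximate) local optimum $S_2$ among the size-$k$ subsets of $X\setminus S_1$. Both sets have size exactly $k$ and are therefore feasible, and the algorithm outputs whichever has the larger $f$-value; stopping each search as soon as no swap improves the value by a $(1+\delta)$ factor keeps the running time polynomial at the price of an $\epsilon$ loss in the ratio.

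With $C'=C\setminus S_1\subseteq X\setminus S_1$, Lemma~\ref{lem-three-subsets} applied to $C$, $S_1$, and $S_2$ gives
\begin{displaymath}
OPT=f(C)\le f(S_1\cup C)+f(S_1\cap C)+f(S_2\cup C').
\end{displaymath}
It remains to charge the three right-hand terms to the two feasible sets the algorithm actually produces.

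The engine of the analysis is the single-matroid local-search inequality specialized to the uniform matroid, whose bases are exactly the size-$k$ sets. Since $|S_1|=|C|=k$, I can match $C\setminus S_1$ bijectively to $S_1\setminus C$, sum the local-optimality inequalities over the matching, and invoke submodularity to obtain $(2+\epsilon)f(S_1)\ge f(S_1\cup C)+f(S_1\cap C)$, which disposes of the first two terms at once. For the third term I would run the analogous argument for $S_2$ against $C'$ in the restricted universe; because $|C'|\le k=|S_2|$ the matching is only an injection, but discarding the unused elements of $S_2$ via non-negativity of $f$ still yields $(2+\epsilon)f(S_2)\ge f(S_2\cup C')$. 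Substituting both bounds gives $OPT\le (2+\epsilon)\bigl(f(S_1)+f(S_2)\bigr)\le 2(2+\epsilon)\max\{f(S_1),f(S_2)\}$, so the returned value is at least $\frac{1}{4+2\epsilon}\,OPT$, i.e.\ $(0.25-O(\epsilon))\,OPT$.

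The step I expect to require the most care is making the local-search inequality honest under the \emph{exact} cardinality constraint. Unlike the usual matroid local search, deletions are forbidden here, so I must check that the size-preserving swaps alone recover the $f(S_1\cap C)$ contribution for $S_1$ and, for $S_2$, that the size mismatch $|C'|<k$ does not spoil the bound on $f(S_2\cup C')$. A second, more elementary point is that the construction of $S_2$ presumes $|X\setminus S_1|\ge k$, that is $n\ge 2k$; this comes for free, since replacing $f$ by the submodular function $S\mapsto f(X\setminus S)$ turns the exact-$k$ problem into the equivalent exact-$(n-k)$ problem, so I may always assume $k\le n/2$.
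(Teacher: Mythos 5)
Your proof is correct, and its skeleton coincides with the paper's: the same appeal to Lemma~\ref{lem-three-subsets} with $C'=C\setminus S_1$ and $S_2\subseteq X\setminus S_1$, the same charging of $f(S_1\cup C)+f(S_1\cap C)$ to $S_1$ and of $f(S_2\cup C')$ to $S_2$, and the same complementation $g(S)=f(X\setminus S)$ to reduce to $k\le |X|/2$. Where you depart from the paper is in the subroutines: the paper literally ``combines'' two known algorithms, invoking the local search of \cite{LMNS09} as a black box for $S_1$ (giving $2f(S_1)\ge f(S_1\cup C)+f(S_1\cap C)$) and the greedy algorithm of \cite{GRST10} as a black box for $S_2$ (giving $f(S_2)\ge 0.5\,f(S_2\cup C')$ for all $|C'|\le k$), whereas you run swap-based local search twice and re-derive both inequalities yourself --- for $S_1$ via the bijection between $C\setminus S_1$ and $S_1\setminus C$, and for $S_2$ via an injection of $C'\setminus S_2$ into $S_2\setminus C'$ together with non-negativity of $f$ to discard the leftover deletion term. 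Both routes are sound: your injection-plus-nonnegativity step is exactly what makes the size mismatch $|C'|<k$ harmless, and restricting the swaps to $X\setminus S_1$ is the right feasibility check. The paper's version buys brevity and a cheaper (greedy) second phase; yours buys a self-contained argument showing that local search alone suffices for both phases, plus a more honest accounting of the $(1+\delta)$-approximate stopping rule, under which the polynomial-time guarantee is really $0.25-O(\epsilon)$ --- a loss the paper's clean ``$0.25$'' statement silently incurs as well, since its cited subroutines also only reach approximate local optima in polynomial time.
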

\begin{proof}
First, we use the local search algorithm of  \cite{LMNS09} and compute  a set
$S_1$ whose size is $k$ and 
$2f(S_1) \geq f(S_1 \cup C) + f(S_1 \cap C)$ for any $C$ with $|C| = |S_1| = k$.
Next, we use the greedy algorithm of \cite{GRST10} and compute a set $S_2\subseteq X\backslash S_1$ of size $k$ such that 
for any  $C'$ with $|C'| \leq k$, $f(S_2) \geq 0.5f(S_2\cup C')$. Let $C$ be the true optimum and $C' = C\backslash S_1$. Therefore, 
 $$
 2f(S_1) + 2f(S_2) \geq f(S_1 \cup C) + f(S_1 \cap C) + f(S_2\cup C') \geq f(C) = OPT
 $$
Thus, the better of $S_1$ and $S_2$ gives an approximation factor $0.25$.

Here, we have assumed that $k \leq \frac{|X|}{2}$. If not, we can alternatively solve the problem for the derived submodular function $g(S) = f(X \setminus S)$ subject to cardinality constraint $k' = |X| - k$.
\end{proof}
The approximation factor $0.25$ matches that of \cite{V09}, though our algorithm is simpler and  straightforward to implement.

\section{Multiple Knapsack Constraints} \label{k-knapsack}
Lee et al.~\cite{LMNS09} propose a $0.2$-factor approximation algorithm for the problem.
They basically divide the elements into two sets of heavy and light objects and then solve the problem separately for each set  and return the maximum of the two solutions.

 We improve their result by considering both heavy and light elements together. Our algorithm finds a fractional solution and then integrates it by using independent rounding. We use some of the properties of the independent rounding; For the sake of completeness, we mention it  before presenting the main algorithm. 
 
 Let $x = (x_1,\ldots, x_n)$ be a fractional solution and $(X_1,\ldots, X_n) \in \{0,1\}^n$ be an integral solution obtained from $x$ by  randomized independent rounding. We observe that $\textbf{E}[X_i]=x_i$ and for any subset $T$, $\textbf{E}[\prod_{i \in T}X_i]=\prod_{i \in T} x_i$, and $\textbf{E}[\prod_{i \in T}(1-X_i)]=\prod_{i \in T} (1-x_i)$. Considering these properties, as in \cite{CVZ10} (Theorem II.1) and \cite{GKPS06} (Theorem 3.1), we obtain the following Chernoff-type concentration bound for  linear functions of $X_1,\ldots,X_n$.

\begin{lemma} \label{lem-cor-1-2-cvz09}
Let $a_i \in [0,1]$ and $X=\sum a_i X_i$ where $(X_1,\ldots,X_n)$ are obtained by randomized independent rounding from a starting point $(x_1,\ldots,x_n)$. 
Then (i) for $\delta \in [0,1]$, and $\mu \geq \textbf{E}[X]=\sum a_i x_i$, we have $Pr[X\geq (1+\delta) \mu] \leq e ^{-\mu \delta ^2 /3}$ and (ii) for $\delta \geq 1$, $Pr[X\geq (1+\delta) \mu] \leq e ^{-\mu \delta /3}$.
\end{lemma} 

The following simple observation will be useful in the presentation of the algorithm.

\begin{lemma} \label{lem-union-subm-function}
Let $f \colon 2^X \to \mathbb{R_+}$ be a nonnegative submodular function over $X$, $A \subseteq X$ and $X'=X\setminus A$, then $h \colon 2^{X'} \to \mathbb{R_+}$ where $h(S)=f(A\cup S)$, $\forall S \subseteq X'$ is a nonnegative submodular function.
\end{lemma}

\begin{proof}
Fix an arbitrary $A \subseteq X$ and define $X'=X\setminus A$. Let $S$ and $T$ with $S\subseteq T$ be two arbitrary subsets of $X'$. Let $x \in X'\setminus T$.
To show submodularity of $h$, we need show that $h_S(x) \geq h_T(x)$.
Recalling the definition of $h$, it is equivalent to show $f(S\cup A \cup \{x\})-f(S\cup A) \geq f(T\cup A \cup \{x\})-f(T\cup A)$. 
But, this is true since $S\cup A \subseteq T\cup A$ and $f$ is submodular.
The nonnegativity of $h$ is clear, the desired conclusion.
\end{proof}

Our algorithm (Algorithm \ref{alg1} below) is  based on the algorithm of Chekuri et al.~\cite{CVZ10} for maximizing  monotone submodular functions subject to one matroid and multiple knapsack constraints. We have made some modifications to use it for non-monotone functions. 

\begin{algorithm} \label{alg1}
\KwIn{Elements weights $\{c_{i,j}\}$, parameter $0 <\epsilon < 1/(4k^2)$, and a non-monotone submodular function $f$}
%\KwResult{TEST}

 $D \leftarrow \emptyset$.

\ForEach{subset $A$ of at most $1/\epsilon ^4$ elements}{
0. Set $D\leftarrow A$ if $f(A) > f(D)$\;

1. Redefine $C_j=1-\sum_{i \in A}{c_{ij}}$ for $1 \leq j \leq k$.\;

2. Let $B$ be the set of items $i \notin A$ such that either $f_A(i) > \epsilon ^4f(A)$ or $ c_{ij} > k\epsilon ^3Cj$ for some $j$\;

3. Let $x^*$ be the fractional solution of the following problem:
  \begin{equation} \label{prob-free-mat}
  \max\{H(x) :  x \in [0,1]^{X'}; \hspace{5pt}  \forall j \sum {c_{ij}x_i \leq (1-\epsilon)C_j}\}
  \end{equation}
  where $X' = X\setminus (A\cup B)$, and $H(x)$ is the multilinear extension of $h(S)=f(A\cup S)$, $\forall S \subseteq X'$\;
  
 4. Let $R$ be the result of the independent rounding applied to $x^*$;\\ 
If $\exists j:\sum_i c_{ij}x_i > C_j$  then $S \leftarrow \emptyset$ else $S \leftarrow R$; 
 Set $D\leftarrow A\cup S$ if $f(A\cup S) > f(D)$\;
}
 Return $D$.
 \caption{Non-Monotone Maximization Subject to Multiple Knapsacks}
\end{algorithm}

The following theorem shows how good our algorithm is.
\begin {theorem} \label{thm-kknap-non-mono}
Algorithm 1 returns a solution of expected value at least $(0.25-2\epsilon)\OPT$.
\end{theorem}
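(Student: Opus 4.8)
The plan is to single out one iteration of the for-loop whose guessed set $A$ already contains all of the ``heavy'' part of the optimum, and to argue that on that iteration the rounded solution recovers a $(0.25-2\epsilon)$ fraction of $OPT$. Write $O$ for a true optimum, so $f(O)=OPT$. First I would define $A^{*}\subseteq O$ to be the elements of $O$ that are either high value---those whose marginal contribution exceeds the threshold $\epsilon^{4}f(A^{*})$ used in step~2---or heavy, i.e. $c_{ij}>k\epsilon^{3}C_{j}$ for some $j$. Ordering $O$ by decreasing marginal values and using submodularity bounds the number of high-value elements by $1/\epsilon^{4}$, while $\sum_{i\in O}c_{ij}\le C_{j}$ forces at most $1/(k\epsilon^{3})$ heavy elements per knapsack and hence $1/\epsilon^{3}$ heavy elements overall; the two contributions are $O(1/\epsilon^{4})$, so the loop does reach $A=A^{*}$. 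I would fix that iteration for the rest of the proof.

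On this iteration set $g=f_{A^{*}}$, so that $g(O\setminus A^{*})=f(O)-f(A^{*})=OPT-f(A^{*})$. By the defining property of $A^{*}$ every element of $O\setminus A^{*}$ has small residual marginal value and is light, so none of them is placed in $B$; thus $O\setminus A^{*}\subseteq X'$ and its indicator is an eligible point of problem~(\ref{prob-free-mat}). Next I would lower bound the fractional optimum. The point $(1-\epsilon)\mathbf{1}_{O\setminus A^{*}}$ lies in the box (hence in the free-matroid polytope) and satisfies $(1-\epsilon)\sum_{i\in O\setminus A^{*}}c_{ij}\le(1-\epsilon)C_{j}$ in the residual capacities, so it is feasible; since $F$ is concave along every nonnegative direction, $G\bigl((1-\epsilon)\mathbf{1}_{O\setminus A^{*}}\bigr)\ge(1-\epsilon)\,g(O\setminus A^{*})$. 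The fractional solution returned in step~3 therefore satisfies $G(x^{*})\ge\tfrac14(1-\epsilon)(OPT-f(A^{*}))$, where the factor $\tfrac14$ is the guarantee of the (fractional local-search) solver for the non-monotone objective $G$ against its fractional optimum.

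Finally I would convert $x^{*}$ into the set $A^{*}\cup R$. Because randomized pipage rounding preserves value in expectation, $\mathbf{E}[g(R)]\ge G(x^{*})$, and since $g=f_{A^{*}}$ gives $f(A^{*}\cup R)=f(A^{*})+g(R)$, we get $\mathbf{E}[f(A^{*}\cup R)]\ge f(A^{*})+\tfrac14(1-\epsilon)(OPT-f(A^{*}))$, which is at least $\tfrac14(1-\epsilon)\,OPT$ for every value of $f(A^{*})$. To make $A^{*}\cup R$ admissible I still have to control packability: all surviving elements are light ($c_{ij}\le k\epsilon^{3}C_{j}$) and $\mathbf{E}[\sum_{i\in R}c_{ij}]\le(1-\epsilon)C_{j}$, so the negative-correlation tail bound of Lemma~\ref{lem-cor-1-2-cvz09} keeps each $\Pr[\sum_{i\in R}c_{ij}>C_{j}]$ tiny, and a union bound over the $k$ knapsacks makes $R$ packable with probability close to $1$. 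Charging the value lost on the rare infeasible event, together with the $(1-\epsilon)$ factors, against an additional $\epsilon\,OPT$ yields the claimed expected value $(0.25-2\epsilon)OPT$.

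The hard part will be the fractional step together with the feasibility of the rounding. Unlike the monotone setting, where the continuous greedy reaches $(1-1/e)$ of the fractional optimum, the non-monotone $G$ admits only the weaker $\tfrac14$ guarantee, and one must verify that the method producing $x^{*}$ really attains it against a fractional optimum that is itself only accessible through the concavity-along-rays estimate of the previous paragraph. The second delicate point is the concentration bookkeeping: the $\epsilon$-slack in the relaxed knapsacks must dominate the per-element weights $k\epsilon^{3}C_{j}$ strongly enough, given $\epsilon<1/(4k^{2})$, that Lemma~\ref{lem-cor-1-2-cvz09} drives the failure probability of all $k$ constraints down while costing only $O(\epsilon)\,OPT$ in expected value.
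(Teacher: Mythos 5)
Your overall architecture matches the paper's: fix one iteration of the enumeration, lower-bound the optimum of the fractional problem (\ref{prob-free-mat}) via the feasible point $(1-\epsilon)\mathbf{1}_{O'}$ and concavity along nonnegative directions, invoke the $\tfrac14$ guarantee of Theorem 4 of \cite{LMNS09}, use the value-preservation property of randomized pipage rounding, and control knapsack violations with Lemma \ref{lem-cor-1-2-cvz09} plus a union bound. However, your choice of the guessed set breaks the argument. You define $A^{*}$ to contain both the high-marginal-value elements of $O$ \emph{and} all heavy elements of $O$, and you concede that its size is only ``$O(1/\epsilon^{4})$''. That is not good enough: the bound you yourself derive is $1/\epsilon^{4}+1/\epsilon^{3}$, while Algorithm \ref{alg1} enumerates only subsets of at most $1/\epsilon^{4}$ elements. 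So the iteration with $A=A^{*}$, on which your entire proof conditions, need not exist. (Your definition of $A^{*}$ is also circular --- the threshold $\epsilon^{4}f(A^{*})$ is stated in terms of the set being defined --- though this could be repaired by ordering $O$ by decreasing marginal values, as the paper does.)

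The paper avoids this by \emph{not} absorbing the heavy elements into $A$: it takes $A$ to be exactly the first $1/\epsilon^{4}$ elements of $O$ in decreasing marginal-value order, accepts that the heavy elements of $O\setminus A$ are discarded into $B$, and bounds the resulting loss. This is the idea your proof is missing: each knapsack admits at most $1/(k\epsilon^{3})$ heavy items of $O\setminus A$, so there are at most $1/\epsilon^{3}$ of them in total, and each has marginal value with respect to $A$ at most $\epsilon^{4}f(A)\le\epsilon^{4}OPT$; hence $g(O\cap B)\le\epsilon\,OPT$ and, by submodularity, $g(O')\ge OPT-f(A)-\epsilon\,OPT$ for $O'=O\setminus(A\cup B)$. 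That extra $\epsilon\,OPT$ loss, combined with the $(1-\epsilon)$ scaling, is exactly where the $-2\epsilon$ in the theorem's bound comes from; your claimed cleaner bound of $\tfrac14(1-\epsilon)OPT$ is an artifact of the inadmissible choice of $A^{*}$. The remaining steps of your proposal (concavity along the ray, the $\tfrac14$ fractional guarantee, $\mathbf{E}[g(R)]\ge G(x^{*})$, and the concentration bookkeeping with $\mu=(1-\epsilon)/(k\epsilon^{3})$) would then go through essentially as in the paper.
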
 
\begin{proof}
The proof follows the line of proofs of \cite{CVZ10} with major changes to adapt it for non-monotone case.
Let $O$ be the optimal solution and $\OPT = f(O)$.
Assume $|O|\geq \frac{1}{\epsilon^4}$; otherwise, our algorithm finds the optimal solution in Line 0.
Sort the elements of $O$ by their decreasing marginal values, and let $A\subseteq O$ be 
the first $ \frac{1}{\epsilon^4}$ elements. Consider the iteration in which  this set $A$ is chosen.
Since $A$ has $\frac{1}{\epsilon^4}$ elements, the marginal value of its last element and every element not in $A$ is at most $\epsilon^4f(A) \leq \epsilon^4 \OPT$. 
Hence, throwing away elements whose marginal value is bigger than $\epsilon^4f(A)$ does not affect items in $O\setminus A$.
We also throw away the set $B \subseteq X\setminus A$ of items whose size in some knapsack is more than $k\epsilon^3Cj$. 
In $O\setminus A$, there can be at most $1/(k\epsilon^3)$ such items for each knapsack, i.e., $1/\epsilon^3$ items in total. 
Since their marginal values with respect to $A$ and consequently w.r.t. $A\cup O'=O\setminus B$ are bounded by $\epsilon^4\OPT$ (where $O'=O\setminus (A\cup B)$), by submodularity these items together have marginal value $f(O)-f(O\setminus B) \leq \epsilon OPT$, therefore $f(O\setminus B) \geq (1-\epsilon) OPT$. 
For set $O'$ we have:
\begin{displaymath}
h(O') = f(A\cup O')=f(O\setminus B) \geq (1-\epsilon) OPT.
\end{displaymath}
The indicator vector $(1-\epsilon)1_{O'}$ is a feasible solution for Problem \ref{prob-free-mat} (specified at  step 3 of Algorithm \ref{alg1}). By the concavity of $H(x)$ along the line from the origin to $1_{O'}$, we have \\ $H\big((1-\epsilon)1_{O'}\big) \geq (1-\epsilon)h(O')\geq (1-2\epsilon)OPT$. By Theorem 4 of \cite{LMNS09} we can compute in polynomial time a fractional solution $x^*$ with value:
\begin{displaymath}
H(x^*) \geq \frac{1}{4}H\big((1-\epsilon)1_{O'}\big) \geq (\frac{1}{4}-\epsilon)OPT.
\end{displaymath}
Notice, according to Lemma \ref{lem-union-subm-function}, $h$ is a nonnegative submodular function and we can apply the algorithm of ~\cite{LMNS09}.
Finally, we apply independent rounding to $x^*$ and call the resulting set $R$. By the construction of independent rounding, we have $\textbf{E}[h(R)] = H(x^*)$. However, $R$ might
violate some of the knapsack constraints.

Define $P(l)=\{x \in [0,1]^{X'}; \hspace{5pt}  \forall j \sum {c_{ij}x_i \leq lC_j}\}$ for $l\geq 1$ and $l$ is integer.
Define $A_1$ as the event that $\mathbf{1}_R \in P(1)$.
By definition of $S$, we have $\textbf{E}[h(S)]=\textbf{E}[h(R)|A_1]Pr[A_1]$. 
Analogously, define disjoint events $A_l$ such that $\mathbf{1}_R \in P(l)\setminus P(l-1)$ for $l \geq 2$ and $l$ is integer.

We have

\begin{equation} \label{outcome-of-events-knap}
H(x^*)=\textbf{E}[h(R)]=\sum_{l \geq 1} \textbf{E}[h(R)|A_l]Pr[A_l].
\end{equation}

Consider a fixed knapsack constraint $j$.
Our fractional solution $x^*$ satisfies $\sum{c_{ij}x^*_i} \leq (1-\epsilon)C_j$. 
Also, we know that all sizes in the reduced instance are bounded by $c_{ij}\leq k \epsilon ^3 C_j$. 
By scaling, $c'_{ij}=c_{ij}/(k\epsilon ^3 C_j)$, 
%we can apply corollary 1.2 of \cite{CVZ09} with 
we use Lemma \ref{lem-cor-1-2-cvz09} ($i$)  with 
$\mu=(1-\epsilon)/(k\epsilon ^3)$:
\begin{displaymath}
Pr[\sum_{i \in R}c_{ij}>C_j] \leq Pr[\sum_{i \in R}c'_{ij}>(1+\epsilon)\mu] \leq e^{-\mu \epsilon ^2/3}<e^{-1/4k \epsilon}.
\end{displaymath}

From this, $Pr[A_1]\geq 1-ke^{-1/4k\epsilon}$ and $Pr[A_2]\leq ke^{-1/4k\epsilon}$.

Similarly, we calculate the probability of events $A_l$, $l\geq 3$.
Let $\delta=(l-2+\epsilon)/(1-\epsilon)$. 
From Lemma \ref{lem-cor-1-2-cvz09} $(ii)$ and using $\mu=(1-\epsilon)/(k\epsilon^3)$,

\begin{displaymath}
Pr[\sum_{i \in R}c_{ij}>(l-1)C_j] \leq Pr[\sum_{i \in R}c'_{ij}>(1+\delta)\mu] \leq e^{-\mu \delta /3}\leq e^{-(l-2)/ (k\epsilon^3)}\leq e^{-l/(3k \epsilon^3)}.
\end{displaymath}

Using the union bound, we can write for any $l\geq 3$

\begin{equation*}
Pr[A_l]\leq Pr[\exists j; \sum_{i\in R}c_{ij}>(l-1)C_j]\leq ke^{-l/(3k \epsilon^3)}.
\end{equation*} 

Since $H$ is concave along rays through the origin, for all $l\geq 3$ we obtain

\begin{equation*}
\begin{array}{ll}

\textbf{E}[h(R)|A_l] & \leq \max \{H(x)|\ x \in P(l)\} \\ 
& \leq  4lH(x^*).

\end{array}
\end{equation*}

Plugging our bounds into \ref{outcome-of-events-knap} we obtain
\begin{displaymath}
\begin{array}{ll}
H(x^*)
&=\textbf{E}[h(R)|A_1]Pr[A_1]+\sum_{l\geq 2} \textbf{E}[h(R)|A_l]Pr[A_l]\\
&\leq \textbf{E}[h(R)|A_1]Pr[A_1]+8kH(x^*)e^{-1/(4k\epsilon)}+4kH(x^*)\sum_{l\geq 3}le^{-l/(3k\epsilon^3)}\\
&=\textbf{E}[h(S)]+4kH(x^*)(2e^{-1/(4k\epsilon)}+\sum_{l\geq 3}le^{-l/(3k\epsilon^3)}).
\end{array}
\end{displaymath}

Let $q=e^{-1/(3k\epsilon^3)}$. 
Using formula $\sum_{l\geq 3} lq^l=\frac{3q^3}{1-q}+\frac{q^4}{(1-q)^2} \leq q^{3\epsilon^2}$, we obtain $\sum_{l\geq 3}le^{-l/(3k\epsilon^3)}\leq e^{-1/(k\epsilon)}$.

Therefore, we obtain
$H(x^*) \leq \textbf{E}[h(S)]+ 4kH(x^*)e^{-1/k\epsilon}$.
Using $4ke^{-1/k\epsilon}<\epsilon$ for $\epsilon<\min\{1/k^2,0.001\}$, we get $\textbf{E}[h(S)]\geq (1-\epsilon)H(x^*)$.
Therefore, we have a feasible solution of expected value $\textbf{E}[f(S\cup A)] \geq (1-\epsilon)H(x^*) \geq (1/4-2 \epsilon) OPT$.

\end{proof}

\section{Packing Polytope Constraint} \label{greedy-process}
In this section, we adapt the  continuous greedy process  for non-monotone submodular functions and propose an algorithm for solving the optimization problems subject to a packing polytope constraint. As an application of the technique, we then consider the problem of submodular maximization subject to both one matroid and multiple knapsacks constraints. Finally, we  briefly show how to replace this continuous process with a polynomial time discrete process without suffering much.

\subsection{Continuous greedy process for non-monotone functions} \label{contiprocess}
Similar to  \cite{CCPV08}, the greedy process starts with $y(0) = \textbf{0}$ and  increases over a unit time interval as follows:
\begin{displaymath}
\frac {dy}{dt} = v_{max}(y),
\end{displaymath}
where $v_{max}(y)=argmax_{v \in P}(v.\nabla F(y)).$
%\\[5pt]
When $F$ is a non-monotone smooth submodular function, we have
\begin{lemma} \label{lem-greedy-proc-non-mono}
$y(1) \in P$ and $F(y(1)) \geq (1-e^{-1}) (F(x \vee y(1))-F_{DMAX})$, where $x \in P$, and $F_{DMAX} = \max_{0 \leq t \leq 1}{F(y(1)-y(t))}$.
\end {lemma}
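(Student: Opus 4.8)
The plan is to follow the continuous-greedy analysis of \cite{CCPV08}, reducing the statement to a differential inequality for the scalar function $g(t) = F(y(t))$, but inserting an additive correction term to absorb the loss coming from non-monotonicity. I would begin with two structural facts. Since $y(1) = \int_0^1 v_{max}(y(t))\,dt$ is an average of points $v_{max}(y(t)) \in P$ and $P$ is convex, we get $y(1) \in P$; and since each $v_{max}(y(t)) \in P \subseteq [0,1]^X$ is non-negative, the trajectory $y(t)$ is coordinate-wise non-decreasing, so every tail increment $y(1) - y(t)$ is non-negative and $F_{DMAX}$ is a maximum over legitimate points of $[0,1]^X$.

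Next I would derive the basic derivative estimate. Along the trajectory $g'(t) = \nabla F(y(t)) \cdot v_{max}(y(t))$. Because $(x \vee y(t)) - y(t)$ is non-negative and is coordinate-wise at most $x \in P$, down-monotonicity of $P$ puts it in $P$, so optimality of $v_{max}(y(t))$ gives $g'(t) \ge \big((x\vee y(t)) - y(t)\big) \cdot \nabla F(y(t))$. As $F$ is submodular, its mixed second partials are non-positive, hence $F$ is concave along the non-negative direction $(x\vee y(t)) - y(t)$, and the first-order concavity bound yields $\big((x\vee y(t)) - y(t)\big)\cdot \nabla F(y(t)) \ge F(x \vee y(t)) - F(y(t))$. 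Therefore $g'(t) \ge F(x \vee y(t)) - g(t)$.

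The crux, and the step I expect to be the main obstacle, is to replace the time-varying term $F(x \vee y(t))$ by the fixed quantity $F(x \vee y(1))$. I would prove the correction $F(x \vee y(t)) \ge F(x \vee y(1)) - F(y(1) - y(t))$, which is at least $F(x \vee y(1)) - F_{DMAX}$. Writing $Y = y(1)$, $y = y(t)$, $D = Y - y \ge 0$, this is the subadditivity-type bound $F(x \vee Y) \le F(x \vee y) + F(D)$. The key point is that $f$ is subadditive: from $f \ge 0$ and submodularity, $f(A \cup B) \le f(A) + f(B) - f(A \cap B) \le f(A) + f(B)$. To lift this to the multilinear extension I would couple, independently for each coordinate $i$, the indicator of $i$ in an independent rounding of $x \vee y$ with its indicator in an independent rounding of $D$ so that their union has probability exactly $(x \vee Y)_i = \max(x_i, y_i + D_i)$. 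This is feasible because the target lies in the attainable range $[\max((x\vee y)_i, D_i),\ \min(1,(x\vee y)_i + D_i)]$ for the union of two Bernoulli events, the upper endpoint being valid since $\max(x_i, y_i + D_i) \le \max(x_i, y_i) + D_i$. Under this per-coordinate coupling the union is distributed as an independent rounding of $x \vee Y$, so taking expectations of the pointwise inequality $f(\mathrm{union}) \le f(R(x\vee y)) + f(R(D))$ gives $F(x\vee Y) \le F(x \vee y) + F(D)$.

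Finally I would solve the differential inequality. Combining the two estimates gives $g'(t) + g(t) \ge c$ with the constant $c = F(x \vee y(1)) - F_{DMAX}$. Multiplying by the integrating factor $e^t$ makes the left side $(e^t g(t))'$, and integrating over $[0,1]$ together with $g(0) = F(\mathbf{0}) \ge 0$ yields $e\,g(1) \ge g(0) + c(e-1) \ge c(e-1)$, i.e. $F(y(1)) = g(1) \ge (1 - e^{-1})\big(F(x \vee y(1)) - F_{DMAX}\big)$, as claimed.
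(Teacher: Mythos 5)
Your proposal is correct, and while it shares the overall skeleton of the paper's proof (membership $y(1)\in P$ by convexity, a comparison direction placed in $P$ via down-monotonicity, concavity of $F$ along non-negative directions, an additive $F_{DMAX}$ correction, and the integrating-factor solution of $g'(t)+g(t)\ge c$), it differs genuinely at the crux, namely how the correction term is inserted. The paper fixes the direction once and for all as $v^*=(x\vee y(1))-y(1)$ (its ``fix a point $y$'' must be read as $y=y(1)$), so that the point $y(1)+v^*=x\vee y(1)$ decomposes \emph{exactly} as $(y(t)+v^*)+(y(1)-y(t))$; the correction then follows from the displacement inequality $F(a+b)\le F(a)+F(b)$ for $a,b\succeq 0$, valid for every non-negative smooth submodular $F$ because the entries of $\nabla F$ are non-increasing in all coordinates and $F(\mathbf{0})\ge 0$ (the paper labels this ``concavity,'' but it is really this submodularity fact). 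You instead take the time-varying direction $(x\vee y(t))-y(t)$, which leaves you needing $F(x\vee y(1))\le F(x\vee y(t))+F(y(1)-y(t))$ even though $x\vee y(1)$ is only coordinatewise dominated by, not equal to, $(x\vee y(t))+(y(1)-y(t))$; for non-monotone $F$ this is a real obstacle, and your per-coordinate coupling --- choosing the joint distribution so that the union probability equals $(x\vee y(1))_i$, feasible by the Fr\'echet bounds you verify --- is a correct, self-contained way to lift $f(A\cup B)\le f(A)+f(B)$ to the multilinear extension. The trade-off: the paper's step is a one-line analytic fact holding for arbitrary non-negative smooth submodular $F$, which is the generality in which the lemma is stated, whereas your coupling requires $F$ to be the multilinear extension of a non-negative submodular set function (the only case the paper ever applies the lemma to, so nothing is lost in context). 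In exchange, your route is more robust: it rigorously supplies exactly the step the paper's writeup leaves ambiguous --- if one reads the paper's $v^*$ as built from $y(t)$, its final replacement of $F(y(1)+v^*)$ by $F(x\vee y(1))$ would be unjustified without monotonicity, and your coupling lemma is precisely the missing patch --- and it yields a reusable subadditivity principle for multilinear extensions: $F(u)\le F(v)+F(w)$ whenever $\max(v_i,w_i)\le u_i\le\min(1,v_i+w_i)$ for all $i$.
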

\begin{proof} The proof is essentially similar to that of \cite{CCPV08} with some modifications to adapt it for non-monotone functions. 
 First, the trajectory for $t \in [0, 1]$ is contained in P since
\begin{displaymath}
y(t)=\int _{0}^{t}{v_{max}(y(\tau))d\tau}
\end{displaymath}
is a convex linear combination of vectors in $P$. To prove the approximation guarantee, fix a point
$y$.
Consider a direction $v^* = (x \vee y)-y = (x-y) \vee 0$. This is a non-negative vector; since $v^* \leq x \in P$ and $P$ is down-monotone, we also have $v^* \in P$. Consider the ray of direction $v^*$ starting at $y$, and the function $F(y + \xi v^*)$, $\xi \geq 0$. The directional derivative of $F$ along this ray is $\frac {dF}{d \xi}=v^*.\nabla F$. Since $F$ is smooth submodular (that means, each entry $\frac{\partial F}{\partial y_j}$ of $\nabla F$ is non-increasing with respect to $y_j$) and $v^*$ is nonnegative,  $\frac {dF}{d \xi}$ is non-increasing too and $F(y + \xi v^*)$ is concave in $\xi$. By concavity, we have
\begin {displaymath}
F(y(1) + v^*)-F(y(t)) \leq F(y(t) + v^*)-F(y(t)) + F(y(1)-y(t)) \leq v^* .\nabla F(y(t)) + F_{DMAX}. 
\end {displaymath}
Since $v^* \in P$ and $v_{max}(y) \in P$ maximizes
$v.\nabla F(y)$ over all vectors $v \in P$, we get
\begin{equation} \label{diff-inequ}
v_{max}(y).\nabla F(y) \geq v^*.\nabla F(y) \geq F(y(1) + v^*)-F_{DMAX}- F(y).
\end{equation}

We now get back to  the continuous process and analyze $F(y(t))$. Using the chain rule and the inequality (\ref{diff-inequ}), we get
\begin{displaymath} 
\frac {dF}{dt}=\sum _{j}{\frac{\partial F}{\partial y_j}\frac{dy_j}{dt}}=v_{max}(y(t)).\nabla F(y(t)) \geq F(x \vee y(1))-F_{DMAX}- F(y(t)).
\end{displaymath}
Thus,  $F(y(t))$ dominates the solution of the differential equation 
$$\frac{d \phi}{dt}=F(x \vee y(1))-F_{DMAX} - \phi(t)$$
 which means	 $\phi(t) = (1-e^{-t})(F(x \vee y(1))-F_{DMAX})$. Therefore, $F(y(t)) \geq (1-e^{-t})(F(x \vee y(1))-F_{DMAX})$.
\end{proof}

\subsection{Extending Smooth Local Search.}
As our final tool for obtaining the main algorithm of this section, we propose an algorithm for the following problem:
Let $f$ be a sumbodular function and $F$ be its multilinear extension. Let $u_i\in [0, 1]$, $1\leq i\leq n$, be a set of upper bound variables and $\mathcal{U}:= \{0\leq y_i \leq u_i \hspace{0.25cm} \forall i \in X\}$.
We want to maximize $F$ over the region $\mathcal{U}$:

%A submodular function, $f\colon 2^X \rightarrow \mathbb{R_+}$ is given. $F$ is the multilinear extension of $f$. We have an uppper bound $\{u_i \in [0,1]\}_{i=1}^{n}$ for the variables. The problem is defind over the region 
%$\mathcal{U}:= \{0\leq y_i \leq u_i \hspace{0.25cm} \forall i \in X\}$:
\begin{displaymath}
\max \{F(y) \colon y \in \mathcal{U}\}
\end{displaymath} 

For this, we extend the $0.4$-approximation algorithm (Smooth Local Search or SLS) of \cite{FMV07} as follows. We call our algorithm $FMV_Y$.

We define a discrete set $\zeta$  of values in $[0,1]$, where 	$\zeta=\{p.\delta \colon 0 \leq p \leq 1/\delta \}$, $\delta=\frac{1}{8n^4}$ and $p$ is integer. The algorithm returns a vector whose values come from the discrete set $\zeta$. We show that such a discretization  does not substantially harm our solution, yet it reduces the running time.

Let $U$ be a multiset containing $s_i=\lfloor\frac{1}{\delta} u_i\rfloor$ copies of each element $i\in X$.
We define a set function $g\colon 2^U \rightarrow \mathbb{R}_+$ with $g(T) = F(\ldots, \frac{|T_i|}{s_i}, \ldots)$, where $T\subseteq U$ and $T_i$ contains all copies of $i$ in $T$. The function $g$ has been previously introduced in \cite{LMNS09} and proved to be submodular.
Let $B$ be the solution of running the SLS algorithms for maximizing $g$ and $y$ be its corresponding vector.

%First, we define a new function $g\colon 2^U \rightarrow \mathbb{R}_+$, $g(\cup_{i \in X}{T_i}) = F(\ldots, \frac{|T_i|}{s_i}, \ldots)$ over the ground set $U$. $U$ contains $s_i=\lfloor\frac{1}{\delta} u_i\rfloor$ copies of each element $i \in X$, and any subset $T \subseteq U$ is declared as $T=\cup_{i \in X}{T_i}$ where each $T_i$ cosists of all copies of $i\in X$ from $T$. This function has been previously introduced in \cite{LMNS09}. It is easy to verify that  $g$ is submodular \cite{LMNS09}. 
%Let $B$ be the answer of running the  the SLS algorithm  to maximize $g$.
%$B$ is a representation of some vector $y$ and $y$ is returned as the result.

Based on \cite{FMV07}, we have  $g(B) \geq 0.4 g(A), \hspace{0.25cm} \forall A \in U$; thus
\begin{equation} \label{fmv_equ}
F(y) \geq 0.4 F(z), \hspace{0.25cm} \forall z \in \mathcal{U}\cap \zeta ^n.
\end{equation}
and we can prove the following claim.
\begin{claim} \label{claim-fmv-y}
For any $x \in \mathcal{U}$, $2.5 F(y) \geq F(x)-\frac{f_{max}}{4n^2}$, where $f_{max}=\max \{f(i) \colon i \in X\}$.
\end{claim}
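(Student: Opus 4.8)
The plan is to leverage the discrete guarantee (\ref{fmv_equ}) and then to control the error introduced by discretization. Since $0.4 \cdot 2.5 = 1$, inequality (\ref{fmv_equ}) is equivalent to $2.5\,F(y) \geq F(z)$ for every $z \in \mathcal{U}\cap \zeta^n$. Hence it suffices to show that every $x \in \mathcal{U}$ admits a grid point $z \in \mathcal{U}\cap \zeta^n$ with $F(z) \geq F(x) - \frac{f_{max}}{4n^2}$; chaining the two inequalities then yields the claim immediately.

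First I would construct $z$ from $x$ by rounding each coordinate \emph{down} to the grid: set $z_i = \delta \lfloor x_i/\delta \rfloor$. Because $z_i \leq x_i \leq u_i$ and $z_i$ is a nonnegative integer multiple of $\delta$ not exceeding $1$, we have $z \in \mathcal{U}\cap \zeta^n$, with $z \preceq x$ and $0 \leq x_i - z_i < \delta$ for every $i$. Rounding downward (rather than to the nearest grid point) is what keeps $z$ simultaneously inside $\mathcal{U}$ and on the grid $\zeta^n$.

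Next I would bound $F(x)-F(z)$ using multilinearity. Moving from $z$ to $x$ one coordinate at a time, each single-coordinate change is exact, since $F$ is linear in each $y_i$; the change equals $(x_i-z_i)\,\frac{\partial F}{\partial y_i}$ evaluated at the intermediate point. The crucial estimate is $\frac{\partial F}{\partial y_i} \leq f_{max}$: writing $\frac{\partial F}{\partial y_i} = \textbf{E}[f_{R\setminus\{i\}}(i)]$ for a random set $R$ on the remaining coordinates, submodularity gives $f_{R\setminus\{i\}}(i) \leq f_\emptyset(i) = f(\{i\}) - f(\emptyset) \leq f_{max}$, since $f \geq 0$. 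Summing the at most $n$ positive contributions, each bounded by $\delta f_{max}$, gives $F(x)-F(z) \leq n\delta f_{max} = \frac{f_{max}}{8n^3} \leq \frac{f_{max}}{4n^2}$ for $n \geq 1$, which is exactly what we need.

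The main obstacle is the non-monotonicity of $f$, which makes the marginals $\frac{\partial F}{\partial y_i}$ potentially negative and rules out a naive monotone-Lipschitz argument. The resolution is to observe that, because we round downward, we only need an \emph{upper} bound on each derivative: whenever $\frac{\partial F}{\partial y_i} < 0$, lowering $x_i$ to $z_i$ can only increase $F$, so such coordinates contribute nonpositively to $F(x)-F(z)$ and may be discarded from the sum. Submodularity then supplies the clean uniform bound $f_{max}$ on the positive part, independent of sign, and the remaining arithmetic with $\delta = \frac{1}{8n^4}$ closes the estimate.
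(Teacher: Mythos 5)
Your proof is correct and follows the same route as the paper's: pick a grid point $z \in \mathcal{U}\cap\zeta^n$ close to $x$, establish the discretization bound $F(z) \geq F(x) - \frac{f_{max}}{4n^2}$, and chain it with inequality (\ref{fmv_equ}) using $0.4\cdot 2.5 = 1$. The one genuine difference lies in how that middle estimate is obtained: the paper disposes of it in one line by citing Claim 3 of \cite{LMNS09}, whereas you prove it from scratch --- rounding each coordinate down to the grid, telescoping coordinate by coordinate using the multilinearity of $F$, and bounding each partial derivative by $\frac{\partial F}{\partial y_i} = \textbf{E}\left[f_{R\setminus\{i\}}(i)\right] \leq f_{\emptyset}(i) \leq f_{max}$, where the first inequality is submodularity and the second uses $f(\emptyset)\geq 0$. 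Your version buys self-containedness and in fact a slightly sharper error term, $n\delta f_{max} = \frac{f_{max}}{8n^3} \leq \frac{f_{max}}{4n^2}$. Your observation that only an \emph{upper} bound on the derivatives is needed (so negative marginals arising from non-monotonicity are harmless, given that you round downward) is precisely the mechanism underlying the cited claim as well; so the two proofs rest on the same estimate, one inlined and one imported.
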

\begin{proof}
Let $z$ be the point in $\zeta ^n \cap \mathcal{U}$ that minimizes $\sum_{i=1}^n(x_i-z_i) $. By Claim 3 of \cite{LMNS09},  $F(z) \geq F(x) - \frac{f_{max}}{4n^2}$. Using the  inequality (\ref{fmv_equ}), we get $F(y) \geq 0.4(F(x)-\frac{f_{max}}{4n^2})$. This completes the proof.
\end{proof}

\subsection{The Algorithm}

We now present our  algorithm for  maximizing a smooth submodular function over a solvable packing polytope:

\begin{algorithm}[H] \label{alg2}
\KwIn{A packing polytope $P$ and a smooth submodular function $F$}
%\KwResult{TEST}

1. $y_1 \longleftarrow$ The result of running the continuous greedy process.

2. $y'_1 \longleftarrow argmax_{0 \leq t \leq 1} F(y_1-y(t))$. 

3. $y_{1max} \longleftarrow $  The result of running $FMV_Y$ with the upper bound $y_1$.
  
4. $y_2 \longleftarrow $ The result of  running the greedy process  over the new polytope $P'$ which is built by adding constraints $y_i \leq 1-y_{1_{i}}$ for any $1 \leq i \leq n$ to $P$. Note that $P'$ is a down-monotone polytope.
 
 5. $y'_2 \longleftarrow argmax_{0 \leq t \leq 1} F(y_2-y(t))$.
 
 6. Return $argmax(F(y_1), F(y_2), F(y_{1max}), F(y'_1), F(y'_2))$.

 \caption{Continuous greedy process for non-monotone functions}
\end{algorithm}

\begin{theorem}
The above algorithm  is a $\frac{2e-2}{13e-9}$-approximation algorithm for the problem of maximizing a smooth submodular function $F$ over a solvable packing polytope $P$.
\end {theorem}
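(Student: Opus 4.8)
The plan is to let $x^{*}\in P$ be an optimal point, write $OPT=F(x^{*})$ and $\gamma:=1-e^{-1}$, and to upper-bound $OPT$ by a weighted sum of the five quantities the algorithm compares, each of which is at most the returned value $M:=\max\bigl(F(y_1),F(y_2),F(y_{1max}),F(y'_1),F(y'_2)\bigr)$. Solving for $M$ will produce the claimed ratio.

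First I would extract three inequalities, one from each subroutine. Applying Lemma~\ref{lem-greedy-proc-non-mono} to the first greedy run with comparison point $x=x^{*}$, and using that step 2 chooses $y'_1$ to realise $F_{DMAX}=\max_{t}F(y_1-y(t))$, gives $F(x^{*}\vee y_1)\le \gamma^{-1}F(y_1)+F(y'_1)$. The point $x':=(x^{*}-y_1)\vee 0$ satisfies $x'\le x^{*}\in P$ and $x'\le \mathbf{1}-y_1$, hence $x'\in P'$; applying the same lemma to the second greedy run over the down-monotone polytope $P'$ with comparison point $x'$ yields $F(x'\vee y_2)\le \gamma^{-1}F(y_2)+F(y'_2)$. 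Finally, since $x^{*}\wedge y_1\le y_1$, the point $x^{*}\wedge y_1$ lies in the box $\mathcal{U}=\{0\le y\le y_1\}$ over which $y_{1max}$ is computed, so Claim~\ref{claim-fmv-y} gives $F(x^{*}\wedge y_1)\le 2.5\,F(y_{1max})+\tfrac{f_{max}}{4n^{2}}$.

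The decisive step is a continuous analogue of Lemma~\ref{lem-three-subsets}, namely
\[
F(x^{*}\vee y_1)+F(x^{*}\wedge y_1)+F(x'\vee y_2)\ \ge\ OPT .
\]
Under the correspondence $C\leftrightarrow x^{*}$, $S_1\leftrightarrow y_1$, $S_2\leftrightarrow y_2$ and $C'=C\setminus S_1\leftrightarrow x'$, the three terms are exactly the multilinear values of $S_1\cup C$, $S_1\cap C$, and $S_2\cup C'$, so this is the fractional lift of that lemma. Granting it, I would substitute the three bounds above, use $F(y_1),F(y_2),F(y_{1max}),F(y'_1),F(y'_2)\le M$, and collect coefficients: the total is $\bigl(2\gamma^{-1}+1+2.5+1\bigr)M+\tfrac{f_{max}}{4n^{2}}=\bigl(\tfrac{2e}{e-1}+\tfrac92\bigr)M+\tfrac{f_{max}}{4n^{2}}$. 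Since $\tfrac{2e}{e-1}+\tfrac92=\tfrac{13e-9}{2e-2}$, this rearranges to $M\ge \tfrac{2e-2}{13e-9}\,OPT$ after absorbing the negligible additive term $\tfrac{f_{max}}{4n^{2}}$, which is lower order. The clean match of the coefficients to the published constant is a strong sanity check on the choice of comparison points $x^{*}\vee y_1$, $x^{*}\wedge y_1$, $x'\vee y_2$.

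The main obstacle is proving the fractional three-subsets inequality for the multilinear extension. The tempting route---sampling independent sets $\hat C,\hat S_1,\hat S_2$ with the prescribed marginals and averaging Lemma~\ref{lem-three-subsets}---fails, because no single coupling of $\hat C$ and $\hat S_1$ simultaneously realises the union marginal $x^{*}\vee y_1$ (which forces independence) and the intersection marginal $x^{*}\wedge y_1$ (which forces a monotone coupling), and the threshold coupling that does produce min/max points computes the Lov\'asz extension rather than $F$. I would therefore prove the inequality directly from the structure of $F$: exploiting that $F$ is affine in each coordinate, I would fix all but one coordinate and reduce to checking a piecewise-linear inequality at the breakpoints of the $\vee$, $\wedge$ and $(\cdot)\vee 0$ operations, combined with submodularity; an alternative is a correlated rounding that realises the min and max points while keeping $x'$ feasible for $P'$, so that the second greedy bound continues to apply. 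Pinning down this step rigorously is where I expect essentially all the difficulty to lie, the remaining arithmetic being routine.
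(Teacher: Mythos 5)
Your proposal is correct and takes essentially the same route as the paper: the same three subroutine inequalities (Lemma~\ref{lem-greedy-proc-non-mono} applied at $x^{*}$ and at $x'\in P'$, and Claim~\ref{claim-fmv-y} at $x^{*}\wedge y_1$), the same decisive inequality $F(x^{*}\vee y_1)+F(x^{*}\wedge y_1)+F(x'\vee y_2)\ge F(x^{*})$, and the same arithmetic yielding $\frac{2e-2}{13e-9}$. Note, however, that the fractional three-subsets inequality you single out as the real difficulty is precisely the step the paper asserts with no justification at all---its displayed chain simply writes ``$\ge F(x^{*})$'' as the continuous analogue of Lemma~\ref{lem-three-subsets}---so the gap you honestly flag is a gap in the paper's own proof as well.
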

\begin{proof}
Suppose $x^* \in P$ is the optimum and $F(x^*)=OPT$. By Lemma \ref{lem-greedy-proc-non-mono}, $F(y_1) \geq (1-e^{-1})(F(x^* \vee y_1)-F(y'_1))$. We also have $F(y_2) \geq (1-e^{-1})(F(x' \vee y_2)-F(y'_2))$, where $x'=x^*-(x^* \wedge y_1)$. Note that $x' \in P'$. By Claim \ref{claim-fmv-y}, we also have $F(y_{1max}) \geq 0.4 (F(x^* \wedge y_1) - \frac{f_{max}}{4n^2})$ as $x^* \wedge y_1 \preceq y_1$.\\
By adding up the above inequalities, we get
\begin {displaymath}
\begin{array}{ll}
\frac{e}{e-1}(F(y_1)+F(y_2))+F(y'_1)+F(y'_2)+2.5 F(y_{1max}) &\\
\geq F(x^* \vee y_1) + F(x' \vee y_2)  + F(x^* \wedge y_1) - \frac{f_{max}}{4n^2}& \\ 
\geq F(x^*) - \frac{f_{max}}{4n^2}=OPT - \frac{f_{max}}{4n^2}. &
\end {array}
\end {displaymath}
Therefore, the approximation factor of the algorithm is at least
$\frac{2e-2}{13e-9}OPT$.
\end{proof}

\paragraph{Both one matroid and multiple knapsacks}
As a direct result of the above theorem, we propose the first approximation algorithm 
for  maximizing a submodular function subject to both one matroid and multiple knapsacks.
This problem was solved (approximately) in  \cite{CV09} for monotone submodular functions.

\begin {theorem}
There exists an algorithm with expected value of at least $(\frac{2e-2}{13e-9}-3\epsilon)OPT$ for the problem of maximizing any non-monotone submodular function subject to one matroid and multiple knapsacks.
\end {theorem}
\begin{proof} 
The intersection of the polytopes corresponding  to one matroid and multiple knapsacks is still a solvable packing polytope. Thus, we can achieve a fractional solution by using Algorithm \ref{alg2} together with the enumeration phase (as in Algorithm \ref{alg1}), and then we can round the fractional solution into the integral one using randomized pipage rounding.

%In fact, one can see it the same as the algorithm in \cite{CV09} for maximizing monotone submodular function subject to one matroid and multiple knapsacks with some adaptations. Here, we don't repeat the algorithm and its analysis, instead, we just highlight the modifications that causes the algorithm to work for non-monotone functions and to reach the approximation factor claimed. 

Our algorithm is very similar to that of \cite{CV09} with some modifications to adapt it for non-monotone functions. As the two algorithms are similar, we only highlight the modifications to our algorithm.

The algorithm in \cite{CV09} is for maximizing monotone submodular functions subject to one matroid and multiple knapsacks and uses partial enumeration. At each iteration, after getting rid of all items of large value or size, it defines an optimization problem with scaled down constraints. Since the objective function is monotone, the reduced problem at each iteration is solved using continuous greedy algorithm to find a fractional solution within a factor  $1-1/e$ of the optimal.

For our case, we cannot use the continuous greedy algorithm as our function is not monotone. Instead, we use Algorithm \ref{alg2}
% Observe that the difference is here and since our objective function is non-monotone, we have to use  Algorithm \ref{alg2} 
 to solve the reduced problem and achieve a fractional solution with approximation factor $\frac{2e-2}{13e-9}$. The final step of the two algorithms are identical. 
 At each iteration, we apply randomized pipage rounding to the fractional solution with respect to contracted matroid of the that iteration. 
The result is the set with maximum objective functions over all iterations. 
% The set with the best objective function seen among all iterations is returned as the result. 

Our analysis is similar to that of \cite{CV09} except that our approximation factor for the reduced problem (at each iteration) is $\frac{2e-2}{13e-9}$ as opposed to $1-1/e$ of \cite{CV09}. So, the same analysis works with the two approximation factors exchanged.

Note that, because of considerations in the design of the algorithm, randomized pipage rounding does not violate, with high probability,  the capacity constraints on knapsacks and, therefore, our solution is a feasible one with constant probability. We remark that the argument for the concentration bound in \cite{CV09} is applicable to our analysis, as well.
\end{proof}

%The steps of the analysis of two algorithms is also the same except that we have a solution of $\frac{2e-2}{13e-9}$-approximation for the reduced problem of each iteration instead of $1-e^{-1}$ of optimal whenever used in the analysis. Thus, we also can make use of the fact that randomized pipage rounding makes the size in each knapsack strongly concentrated, and conclude that our solution is feasible with constant probability.  
\subsection{Discretizing Continuous Process }
%To obtain a polynomial time algorithm, we need to 
% discretize the continuous greedy process for non-monotone functions.
% This process will obviously  introduce some errors. However, that this error is insignificant given that the discretization steps are small enough.
%Intuitively speaking , we can assume that the time interval is so small that the difference between the solution of differential inequality and the finite version of the algorithm is insignificant. 
%However, since discretizing the time scale introduces some errors, here we describe the final effect of the discretization.

In order to obtain a polynomial time, we  discretize the continuous  greedy process for non-monotone functions and show  that by taking small enough time steps, this process only introduces a small error that is negligible and the solution to the differential inequality does not significantly change.

%In this section, we propose and analyse  a discrete version of the continuous greedy algorithm.
%First, we start with the following useful lemma which may independently be of interest.
%\begin {lemma} \label{lem-increase-inequality}
%Let  $x,z \in \{0,1\}^X$ and  $y \in [0,1]^X$ such that $y \preceq z$; let $R$ denote a random set corresponding to $y$ with elements sampled independently according to $y_i$'s. We have
%\begin {displaymath}
%F(x \vee z) \leq F(y) + \max_{I \in \mathcal{I}}{\sum_{j \in I}{E[f_R(j)]}} + F_{DMAX}
%\end {displaymath}
%where $F_{DMAX} = max F(z - w)$ for any $w \in [0,1]^X$ such that $w \preceq z$.
%\end {lemma}
%\begin{proof}
%Fix $A,B \in I$ such that $x=1_A$ and $z=1_B$. By submodularity, $f(A\cup B) \leq f(AUR) + f(B-R)$ for any set $R \subseteq B$. Also,  by submodularity, $f(A\cup B) \leq f(R) + \sum_{j \in A}{f_R(j)}+f(B-R)$. By taking the expectation over the random subset $R$,
%\begin{displaymath}
%F(x \vee z) \leq F(y) + \sum_{j \in A}{E[f_R(j)]} + F(z-y) \leq F(y) + \max_{I \in \mathcal{I}}{\sum_{j \in I}{E[f_R(j)]}} + F_{DMAX}.
%\end{displaymath}
%\end{proof}

%Now, we present the discrete version of the algorithm. 
Let $\delta = \frac{1}{n^2}$, and $\zeta = \{p.\delta \colon 0 \leq p \leq 1/\delta\}$ be a set of discrete values.
We set the unit time interval equal to $\delta$ in Algorithm \ref{alg2}, and change lines 2 and 5 of it as follows.
\begin{eqnarray*}
2 &  y'_1 \longleftarrow argmax F(y_1-y(t)), \forall t \in \zeta \\
5 & y'_2 \longleftarrow argmax F(y_2-y(t)),  \forall t \in \zeta \\
\end{eqnarray*}

and obtain the following lemma which is weaker (but not very different) than Lemma \ref{lem-greedy-proc-non-mono}.
\begin {lemma} \label{thm-three-subsets}
$y(1), y'_1 \in P$ and $F(y(1)) \geq (1-e^{-1}) (F(x \vee y(1))-F(y'_1))-o(1)OPT$, where $x \in P$, where $P$ is any solvable packing polytope.
\end {lemma}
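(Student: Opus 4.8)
The plan is to run the discretization argument from \cite{CCPV08} in parallel with the continuous analysis of Lemma \ref{lem-greedy-proc-non-mono}, quantifying the error that accumulates when the continuous ODE is replaced by finitely many steps of size $\delta=1/n^2$. The starting observation is the same as in the continuous case: the discrete iterate $y(1)$ is a convex combination of at most $1/\delta$ vectors $v_{max}(y(t))\in P$, each scaled by $\delta$, so $y(1)\in P$ by down-monotonicity, and $y'_1\in P$ because it is simply one of the intermediate points $y(1)-y(t)$, each of which lies in $P$ by the same reasoning. The substantive work is the approximation guarantee, and for that I would track, at each discrete step, the one-step increment $F(y(t+\delta))-F(y(t))$ and compare it to what the continuous differential inequality would have given over that same time slice.

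Concretely, I would first reuse the key inequality (\ref{diff-inequ}) from Lemma \ref{lem-greedy-proc-non-mono}, which bounds $v_{max}(y(t)).\nabla F(y(t))$ from below by $F(x\vee y(1))-F_{DMAX}-F(y(t))$; with the discrete $F_{DMAX}$ now replaced by $F(y'_1)=\max_{t\in\zeta}F(y(1)-y(t))$. Over one step we move from $y(t)$ to $y(t)+\delta\, v_{max}(y(t))$, and I would expand $F(y(t+\delta))-F(y(t))$ along this direction. Because $F$ is smooth submodular, $F$ is concave along the ray in direction $v_{max}$, so the actual gain over the step is at least $\delta$ times the directional derivative evaluated at the \emph{end} of the step; the gap between that and $\delta\,v_{max}(y(t)).\nabla F(y(t))$ is a second-order term. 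The plan is to bound this per-step loss by something of order $\delta^2$ times a quantity controlled by $f_{max}$ (using that each second partial derivative of the multilinear extension is bounded in magnitude by $f_{max}$, and that $|v_{max}|$ entries lie in $[0,1]$), so that summing over the $1/\delta$ steps leaves a total error of order $\delta\cdot f_{max}=f_{max}/n^2$, which is $o(1)\,OPT$ since $OPT\geq f_{max}$.

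The result of this per-step accounting is that the discrete sequence $F(y(t))$ dominates the solution of the \emph{same} differential inequality $\frac{d\phi}{dt}\geq F(x\vee y(1))-F(y'_1)-\phi$ up to the accumulated error term, so that at $t=1$ we obtain $F(y(1))\geq(1-e^{-1})(F(x\vee y(1))-F(y'_1))-o(1)\,OPT$, matching the claimed statement. Formally I would set up a discrete recurrence $\phi_{t+\delta}\geq \phi_t+\delta\bigl(F(x\vee y(1))-F(y'_1)-\phi_t\bigr)-O(\delta^2 f_{max})$ and solve it, comparing $(1-\delta)^{1/\delta}$ to $e^{-1}$ and checking that the difference contributes only another $o(1)$ factor.

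The main obstacle I anticipate is controlling the second-order error \emph{uniformly} and showing it genuinely aggregates to $o(1)\,OPT$ rather than a constant. Two subtleties need care: first, the argmax vector $v_{max}(y(t))$ is recomputed at each discrete point, so the direction changes between steps and one cannot invoke global concavity along a single ray as cleanly as in the continuous proof—I would instead apply concavity only within each individual step and absorb the between-step drift into the error budget. Second, the discrete $F_{DMAX}=F(y'_1)$ is now a maximum over the finite grid $\zeta$ rather than over all $t\in[0,1]$, so I must argue that restricting to the grid loses at most $o(1)\,OPT$ in this quantity as well; this follows from the same smoothness bound on $F$ applied to the spacing $\delta$ between consecutive grid points, but it is the step most likely to hide a constant that must be checked against the $\delta=1/n^2$ choice.
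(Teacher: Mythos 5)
You should know at the outset that the paper never actually proves this lemma: it is stated at the end of Section 4 with only the preceding informal remark that ``by taking small enough time steps, this process only introduces a small error.'' So your proposal is being judged against the intended argument (the discretization of the continuous greedy from \cite{CCPV08}) rather than a written proof. Your architecture matches that intent: $y(1)$ and $y'_1$ lie in $P$ because they are (sub)convex combinations of the vectors $v_{max}(y(t)) \in P$ together with down-monotonicity, and the per-step reuse of inequality (\ref{diff-inequ}) with $F_{DMAX}$ replaced by the grid maximum $F(y'_1)$ is the right skeleton. (Incidentally, your second ``subtlety'' is a non-issue: the discrete process only ever visits grid points $t \in \zeta$, so the only values $F(y(1)-y(t))$ that enter the per-step inequality are already dominated by the grid maximum $F(y'_1)$; no continuum-versus-grid comparison is needed.)

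The genuine gap is in the quantitative error accounting, and it is not a matter of bookkeeping. Your per-step loss is the second-order term $\tfrac{1}{2}\delta^2 \bigl| v^{T} (\nabla^2 F) v \bigr|$ with $v = v_{max}(y(t))$, and since $v$ may have all $n$ coordinates of constant size, this is a sum over up to $n^2$ Hessian entries. Even granting your claimed entrywise bound $\bigl|\partial^2 F / \partial y_i \partial y_j\bigr| \leq f_{max}$, the per-step loss is $O(\delta^2 n^2 f_{max})$ and the total over $1/\delta$ steps is $O(\delta n^2 f_{max})$, which for $\delta = 1/n^2$ is $O(f_{max})$ --- a constant-order loss, not $o(1)\,OPT$. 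Worse, the entrywise bound itself is false for non-monotone $f$: we have $\partial^2 F/\partial y_i \partial y_j = \mathbf{E}[f_{R+j}(i) - f_R(i)]$, and while monotone marginals lie in $[0, f_{max}]$, non-monotone marginals can be very negative. For example, take ground set $\{i,j\} \cup L$ and $f(S) = |S \cap L|$ if $|S \cap \{i,j\}| \leq 1$, and $f(S) = \min(|S\cap L|, 1)$ if $\{i,j\} \subseteq S$; this is non-negative submodular, every singleton has value at most $1$, yet at $R = L$ the second difference $f_R(i) - f_{R+j}(i)$ equals $|L| - 1 = \Theta(n)$. So the correct worst-case entry bound is $\Theta(n f_{max})$, pushing your total error to $O(n f_{max})$. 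Finally, the closing step ``$o(1)\,OPT$ since $OPT \geq f_{max}$'' is not valid for an arbitrary packing polytope: no singleton indicator (nor any point near one) need belong to $P$, so $OPT$ can be far smaller than $f_{max}$ (the paper makes the same unstated assumption in Theorem 3, but a self-contained proof must address it). To actually close the lemma one needs either a polynomially smaller step size than the paper's fixed $\delta = 1/n^2$, or the finer probabilistic coupling of \cite{CCPV08}, which bounds the one-step gain directly against the gradient at the end of the step with error governed by the probability that two or more new elements appear in the sampled set, avoiding the worst-case Hessian bound altogether. As written, your argument does not establish the stated $o(1)\,OPT$ error term.
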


\hspace{-0.7cm}{\bf Acknowledgement}

The Authors are grateful to Vahab Mirrokni for his help in all the steps of the preparation of this paper. The first author is also thankful to  Ali Moeini (his M.Sc. advisor), Dara Moazzami, and Jasem Fadaei for their help and advice.
The authors also would like to acknowledge the anonymous referees for the their useful comments and suggestions.

\appendix

\end{document}